\documentclass[11pt,letter]{article}
\usepackage{times,amsmath,amsfonts,stmaryrd,amsthm,amssymb,enumitem}
\usepackage{algorithm,booktabs,url,hyperref}
\usepackage[color=yellow]{todonotes}

\usepackage{algorithmic}

\usepackage{pifont}
\usepackage[round]{natbib}

\usepackage{tikz}
\usetikzlibrary{arrows}
\usetikzlibrary{calc}
\usetikzlibrary{shapes}
\usetikzlibrary{fit}
\usetikzlibrary{matrix} 
\usetikzlibrary{positioning}
\usetikzlibrary{decorations.pathreplacing}


\newtheorem{lemma}{Lemma}

\renewcommand{\eqref}[1]{Eq.~(\ref{eq:#1})}

\renewcommand{\P}{\mathbb{P}}
\newcommand{\E}{\mathbb{E}}

\newcommand{\reals}{\mathbb{R}}

\newcommand{\ceil}[1]{{\lceil #1\rceil}}

\newcommand{\one}{\mathbb{I}}

%
%






\newcommand{\cB}{\mathcal{B}}

\title{Selecting with History}
\author{Tom Hess and Sivan Sabato\\ Ben-Gurion University of the Negev\\ Beer Sheva, Israel\\\texttt{$\{$tomhe,sabatos$\}${@}cs.bgu.ac.il}}
\date{}

\newcommand{\totallength}{{N}}
\newcommand{\numsegments}{{K}}

\newcommand{\swh}{\ensuremath{\mathrm{SwH}}}

\newcommand{\Qinf}{{R}}

\newcommand{\psecp}{P_{\mathrm{sp}}}

\newcommand{\phasefrac}{\beta}
\newcommand{\phasesize}{\cB}

\setenumerate{leftmargin=*,align=left}

\usepackage[letterpaper,includeheadfoot,margin=1in]{geometry}

\begin{document}
\maketitle

\begin{abstract}
We define a new selection problem, \emph{Selecting with History}, which extends the secretary problem to a setting with historical information. We propose a strategy for this problem and calculate its success probability in the limit of a large sequence.
\end{abstract}

In the classical secretary problem \citep{dynkin1963optimum,gilbert1966recognizing}, $n$ numbers appear at a random order. The algorithm is allowed to select a single number. If it decides to select a number, it must do so immediately, before observing the next numbers, and it cannot later change its decision. The goal of the algorithm is to select the maximal number with the highest probability, where the set of numbers is selected by an adversary and the order of their appearance is random. \cite{gilbert1966recognizing} show that, for any input size $n$,
there is a number $t_n < n$ such that the optimal strategy is to observe the
first $t_n$ numbers, set $\theta$ to be the maximal number among those, and then
select the first number in the rest of the sequence  which is larger than
$\theta$. They show that $\lim_{n \rightarrow \infty} t_n/n = 1/e$ and that the probability of success of the optimal strategy by also tends to $1/e$ when for $n \rightarrow \infty$. 

In this note we define a new selection problem, \emph{Selecting with History} (SwH), which extends the secretary problem to a setting with historical information. We propose a strategy for this problem, and calculate its success probability in the limit of a large sequence.

Let $N,K \geq 2$ be integers, such that $K$ divides $N$. Let $Z$ be a finite
set of real numbers of size $N$. In this problem, the numbers in $Z$ are
ordered according to a uniformly random order. The algorithm observes
the first $N(1-1/K)$ numbers (the \emph{history}). Then, the algorithm
observes the last $N/K$ numbers (the \emph{selection sequence}) one by one,
and should select the maximal number in the selection sequence with the
highest probability. As in the secretary problem, the algorithm may only
select a number immediately after observing it, and cannot regret this
selection later.  The secretary problem is thus equivalent to $\swh$ with
$K = 1$. When $K \geq 2$, one can ignore the history and simply apply the
optimal secretary problem strategy to the selection sequence. However,
this does not exploit the information from the history. Instead, we propose the following strategy for
$\swh$. This strategy is parametrized by $\beta \in (0,1)$.
\begin{quote}
\emph{During the first $\ceil{\beta N/K}$ numbers in the selection sequence, select the first number that exceeds the $K$th-largest value in the history. If no such number was found in this part of the selection sequence, select from the rest of the sequence the first number that exceeds the maximal number observed so far in the selection sequence.}
\end{quote}
This strategy is inspired by a strategy proposed in \cite{gilbert1966recognizing} for a setting where a selection sequence is drawn i.i.d.~from a known distribution. Whereas under a known distribution the first threshold can be set based on this knowledge, here we estimate it based on the history.  

As in the secretary problem, the probability of success of this strategy depends only on the rank order of the numbers, and not on their specific values. For $K \geq 2$ that divides $N$, we denote by $\Qinf(N,K)$ the probability that the proposed strategy succeeds in selecting the maximal number from the selection sequence, for any $Z$ of size $N$. This probability depends on $\beta$, which we leave as an implicit parameter of $\Qinf$. 
For convenience, we also let $\Qinf(N,1) := \psecp(\totallength)$, where $\psecp(n)$ denotes the success probability of the optimal secretary problem strategy on an input sequence of size $n$.

Define 
\[
Q(K):=\lim_{L \rightarrow \infty} \Qinf(LK,K).
\]
By the definition of $\Qinf$, $Q(1) = \lim_{n\rightarrow \infty} \psecp(n) = 1/e$. 
The following lemma gives the value of $Q(\numsegments)$ for $K > 1$, as a function of $\beta$. 
\begin{lemma}\label{lem:qf}
If $K > 1$, then 
\begin{align*}
&Q(\numsegments) =\phasefrac \log(1/\phasefrac)(1-\frac{1}{\numsegments})^{\numsegments} + \sum_{j=1}^{\infty} \binom{j + \numsegments-1}{\numsegments-1} (1-\frac{1}{\numsegments})^{\numsegments} \frac{1}{\numsegments^{j}}\left(\frac{1-(1-\beta)^j}{j}+ \phasefrac \int_{\beta}^1 \frac{(1-x)^{j-1}}{x}\,dx\right). 
\end{align*}
\end{lemma}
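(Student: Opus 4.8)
The plan is to pass to the large-sequence limit, where the random partition of the $N$ rank-ordered elements into the history (each element independently landing there with probability $1-1/K$) and the selection sequence (probability $1/K$) becomes exactly i.i.d., and where the positions of the selection-sequence elements become i.i.d.\ uniform on $[0,1]$, with the subinterval $[0,\beta]$ playing the role of the first $\lceil\beta N/K\rceil$ numbers. Since success depends only on rank order, I would first argue that $\Qinf(LK,K)$ converges, as $L\to\infty$, to the corresponding success probability in this continuum model, and then carry out the entire computation there.

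The central quantity is $j$, the number of selection-sequence elements exceeding the threshold $\theta$, i.e.\ the $K$th-largest history value. Scanning the global ranking from the top, each element is independently a history element (probability $1-1/K$) or a selection element (probability $1/K$), and $\theta$ is reached at the $K$th history element; hence $j$ counts the selection elements appearing before the $K$th history element, which is negative-binomially distributed, $\Pr[j]=\binom{j+K-1}{K-1}(1-1/K)^K K^{-j}$. This already produces the prefactor $\binom{j+K-1}{K-1}(1-1/K)^K K^{-j}$ in the sum, and the weight $(1-1/K)^K$ for $j=0$.

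I would then condition on $j$. The selection max is itself above threshold precisely when $j\ge 1$. For $j\ge 1$, the first phase selects the earliest of the $j$ above-threshold elements landing in $[0,\beta]$, and this equals the selection max iff the max, at uniform position $x$, satisfies $x\le\beta$ while the other $j-1$ top elements all fall after it, contributing $\int_0^\beta(1-x)^{j-1}\,dx=(1-(1-\beta)^j)/j$. The second phase is reached only if no top element lands in $[0,\beta]$; it then selects the first selection-sequence record after $\beta$, which is the max iff the max sits at some $x>\beta$, the other $j-1$ tops all fall in $(x,1]$ (probability $(1-x)^{j-1}$), and the maximum over $[0,x)$ — composed then only of sub-threshold elements, whose in-prefix maximizer is uniform on $[0,x)$ — lies in $[0,\beta]$ (probability $\beta/x$); integrating gives $\beta\int_\beta^1(1-x)^{j-1}x^{-1}\,dx$. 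These two events are disjoint, so their sum is exactly the parenthesized conditional probability. For $j=0$ the first phase never fires and the second phase is precisely the classical secretary strategy with observation fraction $\beta$, whose limiting success probability is $\int_\beta^1(\beta/x)\,dx=\beta\log(1/\beta)$; this is the isolated first term. Weighting each conditional probability by $\Pr[j]$ and summing over $j\ge 0$ reproduces the claimed formula.

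The main obstacle is the analytic justification of the limit rather than the combinatorics above. I expect the real work to be twofold: (i) showing that the finite-$N$ law of $j$, governed by sampling without replacement, converges to the negative binomial, and that the finite-$N$ positions and prefix-maximum locations converge to the stated uniform and continuum objects; and (ii) interchanging the limit with the infinite sum and the integrals. For (ii) I would invoke dominated convergence: each conditional probability lies in $[0,1]$ and $\sum_j\Pr[j]=1$, so the summand is dominated by the summable negative-binomial weights, while the inner integrals are uniformly bounded once the $j=0$ singularity is excised into the separate $\beta\log(1/\beta)$ term. Some care is needed that the $O(1)$ top elements do not disturb the continuum limit of the sub-threshold maximizer's location, which holds because deleting finitely many positions from a uniform configuration of size $N/K\to\infty$ is asymptotically negligible.
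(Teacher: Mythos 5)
Your proposal follows essentially the same route as the paper's proof: decompose over $J=j$, the number of selection-sequence elements exceeding the $K$th-largest history value; obtain the negative binomial law $\binom{j+K-1}{K-1}(1-1/K)^K K^{-j}$ in the limit; and compute the conditional success probability by locating the selection-sequence maximum at $x$, with the same two cases ($x\le\beta$ contributing $\int_0^\beta (1-x)^{j-1}dx$, and $x>\beta$ contributing the extra prefix-maximum factor $\beta/x$) and the same isolated $j=0$ secretary term $\beta\log(1/\beta)$. The one point where you genuinely diverge is the interchange of $\lim_{L\to\infty}$ with the sum over $j$: the paper truncates at $j=cK^2$, bounds the tail $\P[J>cK^2]\le\exp(-bcK^2)$ via Bernstein's inequality for sampling without replacement, and then sends $c\to\infty$. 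Your dominated-convergence alternative is viable but not correct as stated: you need to dominate the \emph{finite-$L$} weights $\P[J=j]$ uniformly in $L$ by a summable sequence, and the limiting negative-binomial weights do not automatically play that role. The repair is short: with $N=LK$ one has $\P[J=j]\le\binom{j+K-1}{K-1}\prod_{l=0}^{j-1}\frac{L-l}{N-K-l}\le\binom{j+K-1}{K-1}\left(\frac{3}{2K}\right)^{j}$ for $L\ge 3$, which is summable for $K\ge 2$ since $3/(2K)\le 3/4$ and the binomial factor grows only polynomially in $j$; but this uniform bound must actually be stated, and with it your version of the tail control is arguably cleaner than the paper's concentration argument.
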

\begin{proof}

Let $\totallength = L \numsegments$. We calculate $\Qinf(\totallength,\numsegments)$ based on its definition, and then take the limit $L \rightarrow \infty$. Let $Z = \{z_1,\ldots,z_N\}$ be the set of input numbers, where $z_1 > z_2 > \ldots > z_N$. Denote by $G$ the event that the \swh\ strategy selects the maximal number in the selection sequence, when the strategy is applied to a random ordering of $Z$. Let $A_1$ be the set of numbers in the history, and let $A_2 = Z \setminus A_1$ be the set of numbers in the selection sequence. Let $A_2' \subseteq A_2$ be the set of first $\phasesize := \ceil{\beta L}$ numbers in selection sequence. 
Let $\theta$ be the $\numsegments$'th largest number in $A_1$, and let $\theta'$ be the largest number in $A_2'$. The strategy described above selects the first number observed from $A_2'$ which is larger than $\theta$ if one exists. Otherwise, it selects the first number observed from $A_2 \setminus A_2'$ that is larger than $\theta'$ (if one exists). 
Let $J = |\{z \in A_2 \mid z > \theta\}|$. We have
$\P[G] = \sum_{j=0}^L \P[G \mid J=j]\P[J = j]$.
Note that the probabilities all depend (implicitly) on $L$.
Let $c > 1$. For any $L \geq cK^2$, 
\[
\P[G] = \sum_{j=0}^{cK^2} \P[G \mid J=j]\P[J = j] + \P[G \mid J > cK^2]\P[J > cK^2].
\]
Define $q(j) := \lim_{L\rightarrow \infty} \P[G \mid J=j]$, $p(j) := \lim_{L\rightarrow \infty} \P[J = j]$, and suppose that for some $\alpha:\reals\rightarrow \reals$, $\alpha(c) \geq \lim_{L\rightarrow \infty}\P[J > cK^2]$. Assuming all these limits exist, we have
\[
\sum_{j=0}^{cK^2}q(j)p(j) \leq \lim_{L\rightarrow \infty}  \P[G] \leq \sum_{j=0}^{cK^2}q(j)p(j) + \alpha(c).
\]
If in addition $\lim_{c \rightarrow \infty}\alpha(c)=0$, 
then, taking $c \rightarrow \infty$ on the inequality above, we get 
\begin{equation}\label{eq:pgfull}
\lim_{L\rightarrow \infty}  \P[G] = \sum_{j=0}^{\infty}q(j)p(j).
\end{equation}

We now give expressions for $q(j),p(j)$ and $\alpha(c)$.
First, for $p(j)$, we calculate $\P[ J=j]$.  Define the random variable $I$ which satisfies $\theta = z_I$. If $I = i$, this means that out of the numbers $z_1,\ldots,z_{i-1}$, exactly $\numsegments-1$ are in $A_1$, and also $z_{i} \in A_1$. Therefore $J = i - \numsegments$. Since $|A_1| = \totallength - L$ and its content is allocated uniformly at random, we have
\[
\P[J = i - \numsegments] = \P[I = i] = \binom{i-1}{\numsegments-1}\prod_{l=0}^{\numsegments-1} \frac{\totallength-L-l}{\totallength-l} \prod_{l=0}^{i-\numsegments-1}\frac{L-l}{\totallength-K-l}.
\]
Therefore 
\begin{equation*}
\P[J = j] = \binom{j + \numsegments-1}{\numsegments-1}\prod_{l=0}^{\numsegments-1} \frac{\totallength-L-l}{\totallength-l} \prod_{l=0}^{j-1}\frac{L-l}{\totallength-\numsegments-l}.
\end{equation*}
Taking the limit for $L \rightarrow \infty$ (recalling $\totallength = L \numsegments$) we get 
\begin{equation}\label{eq:pjlim}
p(j) \equiv \lim_{L\rightarrow \infty}\P[J = j]  = \binom{j + \numsegments-1}{\numsegments-1} (1-\frac{1}{\numsegments})^{\numsegments} \frac{1}{\numsegments^{j}}. 
\end{equation}

Second, to find $q(j)$, we now calculate $\P[G \mid J=j]$. If $j = 0$, then all $z \in A_2$ have $z < \theta$, therefore no element will be selected from $A_2'$. The probability of success is thus exactly as the probability of success of the secretary problem strategy with input size $L$ and threshold $\phasesize = \ceil{\beta L}$. Denote this probability $P_L$. We have, following the analysis in \cite{ferguson1989solved} for the secretary problem,
\[
\lim_{L\rightarrow \infty} P_L = \lim_{L \rightarrow \infty} \frac{\ceil{\phasefrac L}}{L} \sum_{i=\phasesize+1}^L \frac{1}{i-1} =  \lim_{L \rightarrow \infty} \frac{\ceil{\phasefrac L}}{L} \sum_{i=\phasesize+1}^L \frac{1}{L} \left(\frac{L}{i-1}\right) = \phasefrac \int_{\phasefrac}^1 \frac{1}{x}\, dx = \phasefrac \log(1/\phasefrac). 
\]
Hence 
,\begin{equation}\label{eq:pjzero}
q(0) \equiv \lim_{L\rightarrow \infty}\P[G \mid J=0] = \phasefrac \log(1/\phasefrac).
\end{equation}

To find $q(j)$ for $j > 0$, let $R$ be the location in $A_2$ of the maximal number $z_* = \max A_2$. Note that if the strategy does not select anything before reaching location $R$, it will certainly select $z_*$ by the definition of the strategy. Distinguish two cases:
\begin{enumerate}
\item If $R \leq \phasesize$, then $z_*$ is selected as long as all other $j-1$ items that exceed $\theta$ are located after $z_*$. Hence, for $r \leq \min(\phasesize, L-j+1)$
\[
\P[G \mid R = r, J= j] = \prod_{l=0}^{j-2} \frac{L-r-l}{L-1-l} = \prod_{l=0}^{j-2} (1 - \frac{r-1}{L-1-l}).
\]
\item If $R > \phasesize$, then $z_*$ is selected as long as all other $j-1$ items that exceed $\theta$ are located after $z_*$, and also the maximal item in the first $R-1$ items is in the first $\phasesize$ items, so that $z_*$ is the first item in $A_2 \setminus A_2'$ that is larger than $\theta'$. Hence, for $\phasesize \leq r \leq L-j+1$, 
\[
\P[G \mid R = r, J= j] = \prod_{l=0}^{j-2} \frac{L-r-l}{L-1-l}\frac{\phasesize}{r-1} = \prod_{l=0}^{j-2} (1 - \frac{r-1}{L-1-l})\frac{\phasesize}{r-1}.
\]
\item Neither of the conditions above can hold if $R > L -j +1$, since $j-1$ numbers cannot be located after $z_*$ in this case. Hence ,$\P[G \mid R > L -j +1, J= j] = 0$.
\end{enumerate}
Therefore
\begin{align*}
\P[G \mid J= j] &= \sum_{r=1}^{L-j-1}\P[R =r]\P[G \mid J= j, R = r]\\
&=\frac{1}{L} \sum_{r=1}^{L-j+1} \left(\frac{\phasesize}{r-1}\right)^{\one[r > \phasesize]}\cdot \prod_{l=0}^{j-2} (1 - \frac{r-1}{L-1-l}).
\end{align*}
We have
\[
(1 - \frac{r-1}{L+1-j})^{j-1} \leq \prod_{l=0}^{j-2} (1 - \frac{r-1}{L-1-l}) \leq (1 - \frac{r-1}{L-1})^{j-1},
\]

and $\frac{\phasesize}{r-1} = \frac{\ceil{\beta L}}{L}\frac{L}{r-1}$.
Therefore
\begin{align*}
\frac{1}{L} \sum_{r=1}^{L-j+1} \left(\frac{\ceil{\beta L}}{L}\frac{L}{r-1}\right)^{\one[r > \phasesize]}\cdot (1 - \frac{r-1}{L+1-j})^{j-1} &\leq \P[G \mid J= j]\\
&\leq \frac{1}{L} \sum_{r=1}^{L-j+1} \left(\frac{\ceil{\beta L}}{L}\frac{L}{r-1}\right)^{\one[r > \phasesize]}\cdot  (1 - \frac{r-1}{L-1})^{j-1}.
\end{align*}

Taking the limit $L \rightarrow \infty$ on both sides and defining $x = r/L$, this gives, for $j \geq 1$,
\begin{align}
q(j) \equiv \lim_{L \rightarrow \infty} \P[G \mid J= j] &= \int_{0}^\beta (1- x)^{j-1}\, dx + \phasefrac \int_{\beta}^1 \frac{(1-x)^{j-1}}{x}\,dx\notag\\
& = \frac{1-(1-\beta)^j}{j}+ \phasefrac \int_{\beta}^1 \frac{(1-x)^{j-1}}{x}\,dx.
\label{eq:qjnotzero}
\end{align}
Lastly, we are left to show an upper bound $\alpha(c) \geq  \lim_{L\rightarrow \infty}\P[J > cK^2]$ such that $\lim_{c \rightarrow \infty}\alpha(c)=0$.
Recall that if $\theta = z_i$ then $J = i -K$. 
For an integer $t$, denote $B_t = |\{i \mid i < t \text{ and } z_i \in A_2\}|$.
Note that $J \geq t$ if and only if $\theta \leq z_{t+K}$, which occurs if and only if $B_{t+K} \geq t$. 
Therefore $\P[J \geq t] = \P[B_{t+K} \geq t].$
We now give an upper bound on $\P[B_{t+K} \geq t]$, using a concentration bound for sampling without replacement from a population. Denote the ordered numbers in the selection sequence by $(z_{j_1},\ldots,z_{j_L})$. Then 
$B_t = \sum_{i=1}^{L} \one[j_i < t].$ 
$B_t$ is a sum of $L$ uniformly random draws without replacement from the sequence $x_1,\ldots,x_N$, where $x_i := \one[i<t]$. We have $\P[j_i < t] = (t-1)/N$. Hence ,$\E[B_t] = L(t-1)/N = (t-1)/K$. Setting $t = cK^2$ for $c > 1$, we have $t - (t+K-1)/K \geq t/4$. Hence 
,\[
\P[B_{t+K} \geq t] = \P[B_{t+K} - \E[B_{t+K}] \geq t - (t+K-1)/K] \leq 
\P[B_{t+K} - \E[B_{t+K}] \geq t/4].
\]
By Bernstein's inequality for sampling without replacement \citep{boucheron2013concentration}, setting $\epsilon =  t/(4L)$ and $\sigma^2 = \frac{1}{N}\sum_{i=1}^N (x_i-(t-1)/N)^2$,
\[
\P[B_{t+K} \geq t] \leq \P[\frac{1}{L}B_{t+K} - \frac{1}{L}\E[B_{t+K}] \geq \epsilon] \leq \exp(-L\epsilon^2/(2\sigma^2 + (2/3)\epsilon)).
\]
Noting that $\sigma^2 \leq \frac{1}{N}\sum_{i=1}^N x_i^2 = (t-1)/N \leq cK/L$, 
and $\epsilon = cK^2/(4L)$, we get that for some constant $b$, $\P[J \geq cK^2] \leq \exp(-b cK^2).$ 
Setting the RHS to $\alpha(c)$, we get $\alpha(c) \geq \lim_{L\rightarrow \infty}\P[J > cK^2]$ and $\lim_{c \rightarrow \infty}\alpha(c)=0$, as required. 
Combining \eqref{pgfull}, \eqref{pjlim}, \eqref{pjzero}, \eqref{qjnotzero} and the limit above, we get the equality in the statement of the lemma. 
\end{proof}
The value of $Q(K)$ for a given $\beta$ can be calculated numerically. We propose to select $\beta := 0.63$. This gives, e.g. $Q(2) \approx 0.47$, $Q(3) \approx 0.51$, $Q(10) \approx 0.55$. Compare this to $Q(1) = 1/e \approx 0.37$.

\vspace{1em}
In a previous version of this manuscript we proposed to use the strategy for SwH above to improve the competitive ratio of the submodular secretary problem under resource constraints. Unfortunately our analysis turned out to have an error which we have not been able to solve as of yet.

\bibliographystyle{abbrvnat}
\bibliography{arvix}

\end{document}